\newtheorem{lemma}{Lemma}
\newtheorem{theorem}{Theorem}
\theoremstyle{definition}
\newtheorem{condition}{Condition}
\newtheorem{definition}{Definition}
\DeclareMathOperator{\tr}{tr}
\DeclareMathOperator*{\E}{\mathbb E}
\begin{document}

\begin{CJK*}{UTF8}{}

\title{Dynamics of R\'enyi entanglement entropy in diffusive qudit systems}

\CJKfamily{gbsn}

\author{Yichen Huang (黄溢辰)\thanks{yichuang@mit.edu}}

\affil{Center for Theoretical Physics, Massachusetts Institute of Technology, Cambridge, Massachusetts 02139, USA}

\maketitle

\end{CJK*}

\begin{abstract}

My previous work [arXiv:1902.00977] studied the dynamics of R\'enyi entanglement entropy $R_\alpha$ in local quantum circuits with charge conservation. Initializing the system in a random product state, it was proved that $R_\alpha$ with R\'enyi index $\alpha>1$ grows no faster than ``diffusively'' (up to a sublogarithmic correction) if charge transport is not faster than diffusive. The proof was given only for qubit or spin-$1/2$ systems. In this note, I extend the proof to qudit systems, i.e., spin systems with local dimension $d\ge2$.

\end{abstract}

\section{Introduction}

My previous work \cite{Hua19} studied the dynamics of R\'enyi entanglement entropy $R_\alpha$ in local quantum circuits with charge conservation. Initializing the system in a random product state, it was proved that $R_\alpha$ with R\'enyi index $\alpha>1$ grows no faster than ``diffusively'' (up to a sublogarithmic correction) if charge transport is not faster than diffusive.

For simplicity, Ref. \cite{Hua19} only gave a proof for qubit or spin-$1/2$ systems. While the proof also works for qudit systems (i.e., spin systems with local dimension $d\ge2$), this extension was not explicitly presented in Ref. \cite{Hua19}. Due to the recent interest \cite{Zni20}, in this note I give an exposition so that readers who are only interested in the results do not have to spend time verifying that every step of the proof in Ref. \cite{Hua19} remains valid for qudit systems. This note does not contain any essentially new ideas beyond those in Ref. \cite{Hua19}.

For completeness and for the convenience of the reader, definitions and proofs are presented in full so that this note is technically self-contained, although this leads to substantial text overlap with the original paper \cite{Hua19}. It is not necessary to consult Ref. \cite{Hua19} before or during reading this note. However, in this note I do not discuss the conceptual aspects of the work. Such discussions are in Ref. \cite{Hua19}.

I recommend related papers \cite{RPv19, ZL20, Zni20}, which study the same problem with various analytical and numerical methods. These works provide insights that are complementary to those in Ref. \cite{Hua19} and here.

The rest of this note is organized as follows. In Section \ref{s:pre}, I give basic definitions. In Section \ref{s:qudit}, I present results with a complete proof for qudit systems.

\section{Preliminaries} \label{s:pre}

Throughout this note, standard asymptotic notations are used extensively. Let $f,g:\mathbb R^+\to\mathbb R^+$ be two functions. One writes $f(x)=O(g(x))$ if and only if there exist constants $M,x_0>0$ such that $f(x)\le Mg(x)$ for all $x>x_0$; $f(x)=\Omega(g(x))$ if and only if there exist constants $M,x_0>0$ such that $f(x)\ge Mg(x)$ for all $x>x_0$; $f(x)=\Theta(g(x))$ if and only if there exist constants $M_1,M_2,x_0>0$ such that $M_1g(x)\le f(x)\le M_2g(x)$ for all $x>x_0$.

For notational simplicity, we do not specify the base of the logarithm explicitly. All equations involving logarithms are valid as long as the base is an arbitrary fixed positive number.

\begin{definition} [entanglement entropy]
The R\'enyi entanglement entropy $R_\alpha$ with index $\alpha\in(0,1)\cup(1,+\infty)$ of a bipartite pure state $\rho_{AB}$ is defined as
\begin{equation}
R_\alpha(\rho_A):=\frac{1}{1-\alpha}\log\tr(\rho_A^\alpha)=\frac{1}{1-\alpha}\log\sum_{i\ge1}\Lambda_i^\alpha,
\end{equation}
where $\Lambda_1\ge\Lambda_2\ge\cdots\ge0$ with $\sum_{i\ge1}\Lambda_i=1$ are the eigenvalues (in descending order) of the reduced density matrix $\rho_A=\tr_B\rho_{AB}$. The min-entropy is defined as
\begin{equation}
R_\infty(\rho_A):=\lim_{\alpha\to+\infty}R_\alpha(\rho_A)=-\log\Lambda_1.
\end{equation}
The von Neumann entanglement entropy is given by
\begin{equation}
\lim_{\alpha\to1}R_\alpha(\rho_A)=-\tr(\rho_A\log\rho_A).
\end{equation}
\end{definition}

\begin{lemma} \label{l:1}
For $\alpha>1$,
\begin{equation}
R_\infty(\rho_A)\le R_\alpha(\rho_A)\le\frac{\alpha}{\alpha-1}R_\infty(\rho_A).
\end{equation}
\end{lemma}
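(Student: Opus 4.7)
The plan is to work directly with the sum $S := \sum_{i\ge1}\Lambda_i^\alpha$ appearing in the definition of $R_\alpha$, and sandwich it between two elementary bounds involving $\Lambda_1$ alone. Since $\alpha>1$ implies $1-\alpha<0$, I will need to be careful to flip inequalities whenever I divide by $1-\alpha$; writing $R_\alpha = -\frac{1}{\alpha-1}\log S$ makes this less error-prone.

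For the lower bound $R_\alpha\ge R_\infty$, I would use that $\Lambda_i\le\Lambda_1$ for every $i$ and that $\alpha>1$, so $\Lambda_i^\alpha = \Lambda_i\cdot\Lambda_i^{\alpha-1}\le\Lambda_i\cdot\Lambda_1^{\alpha-1}$. Summing over $i$ and invoking $\sum_i\Lambda_i=1$ yields
\begin{equation*}
S\le\Lambda_1^{\alpha-1}.
\end{equation*}
Taking logs and multiplying by $-1/(\alpha-1)$ gives $R_\alpha\ge-\log\Lambda_1=R_\infty$.

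For the upper bound $R_\alpha\le\frac{\alpha}{\alpha-1}R_\infty$, I would keep only the leading term: $S\ge\Lambda_1^\alpha$. Taking logs and again multiplying by $-1/(\alpha-1)$ yields
\begin{equation*}
R_\alpha\le-\frac{\alpha}{\alpha-1}\log\Lambda_1=\frac{\alpha}{\alpha-1}R_\infty.
\end{equation*}

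There is no real obstacle here; the only subtlety is the sign reversal when dividing by $1-\alpha<0$, which I would handle by writing $R_\alpha=-\frac{1}{\alpha-1}\log S$ from the outset. The argument is dimension-free and makes no use of $d$, so it applies uniformly to qudit systems.
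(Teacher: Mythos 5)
Your proof is correct. For the second inequality ($R_\alpha\le\frac{\alpha}{\alpha-1}R_\infty$) you use exactly the paper's argument: drop all but the leading term to get $\sum_i\Lambda_i^\alpha\ge\Lambda_1^\alpha$ and flip the sign when dividing by $1-\alpha<0$. For the first inequality the routes differ slightly: the paper simply invokes the general fact that $R_\alpha$ is monotonically non-increasing in $\alpha$ (which is why $R_\infty$ is called the min-entropy), whereas you give a direct, self-contained derivation via $\Lambda_i^\alpha=\Lambda_i\cdot\Lambda_i^{\alpha-1}\le\Lambda_i\cdot\Lambda_1^{\alpha-1}$, summing against $\sum_i\Lambda_i=1$ to get $\sum_i\Lambda_i^\alpha\le\Lambda_1^{\alpha-1}$. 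Your version buys a fully elementary two-line proof that does not rely on the (nontrivial, usually Jensen-based) monotonicity of the R\'enyi family, at the cost of not recording that more general fact; the paper's version is shorter on the page but outsources the first inequality to a cited standard result. Both are valid, and as you note, neither argument depends on the local dimension $d$.
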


\begin{proof}
For completeness, we give a proof of this well-known result. The first inequality is due to the fact that $R_\alpha$ is monotonically non-increasing in $\alpha$ (this is why $R_\infty$ is called min-entropy). The second inequality follows from
\begin{equation}
R_\alpha(\rho_A)=\frac{1}{1-\alpha}\log\sum_{i\ge1}\Lambda_i^\alpha\le\frac{1}{1-\alpha}\log(\Lambda_1^\alpha)=\frac{\alpha}{\alpha-1}R_\infty(\rho_A).
\end{equation}
\end{proof}

\begin{definition} [local quantum circuit with charge conservation]
Consider a chain of $N$ qudits or spins with local dimension $d\ge2$. Assume without loss of generality that $N$ is even. Let the time-evolution operator be
\begin{equation}
U(t,0)=U(t,t-1)U(t-1,t-2)\cdots U(1,0),\quad t\in\mathbb Z^+.
\end{equation}
Each layer of the circuit consists of two sublayers of local unitaries:
\begin{equation} \label{lu}
U(t,t-1)=\prod_{i=1}^{N/2-1} U_{2i,2i+1}^{(t)}\times\prod_{i=1}^{N/2}U_{2i-1,2i}^{(t)}.
\end{equation}
Each unitary $U_{i,i+1}^{(t)}$ acts on two neighboring spins at sites $i,i+1$, and commutes with $S_i^z+S_{i+1}^z$, where $S_i^z$ is the $z$ component of the spin operator at site $i$. It should be clear that every $U_{i,i+1}^{(t)}$ and hence $U(t,0)$ preserve charge or the $z$ component $\sum_{i=1}^NS_i^z$ of the total spin.
\end{definition}

Let us consider some examples. For spin-$1/2$ ($d=2$),
\begin{equation}
S_i^z=\frac{1}{2}
\begin{pmatrix}
1 & 0 \\
0 & -1
\end{pmatrix}
\end{equation}
in the computational basis $\{|0\rangle,|1\rangle\}$, and
\begin{equation}
U_{i,i+1}^{(t)}=
\begin{pmatrix}
* & 0 & 0 & 0\\
0 & * & * & 0\\
0 & * & * & 0\\
0 & 0 & 0 & *
\end{pmatrix}
\end{equation}
is block diagonal in the basis $\{|00\rangle,|01\rangle,|10\rangle,|11\rangle\}$, where ``$*$'' denotes a possibly non-zero entry, i.e., $U_{i,i+1}^{(t)}$ is the direct sum of a phase factor, a unitary matrix of order $2$, and a phase factor.

For spin-$1$ ($d=3$),
\begin{equation}
S_i^z=
\begin{pmatrix}
1 & 0 & 0\\
0 & 0 & 0\\
0 & 0 & -1
\end{pmatrix}
\end{equation}
in the basis $\{|0\rangle,|1\rangle,|2\rangle\}$, and
\begin{equation}
U_{i,i+1}^{(t)}=
\begin{pmatrix}
* & 0 & 0 & 0 & 0 & 0 & 0 & 0 & 0\\
0 & * & * & 0 & 0 & 0 & 0 & 0 & 0\\
0 & * & * & 0 & 0 & 0 & 0 & 0 & 0\\
0 & 0 & 0 & * & * & * & 0 & 0 & 0\\
0 & 0 & 0 & * & * & * & 0 & 0 & 0\\
0 & 0 & 0 & * & * & * & 0 & 0 & 0\\
0 & 0 & 0 & 0 & 0 & 0 & * & * & 0\\
0 & 0 & 0 & 0 & 0 & 0 & * & * & 0\\
0 & 0 & 0 & 0 & 0 & 0 & 0 & 0 & *
\end{pmatrix}
\end{equation}
is block diagonal in the basis $\{|00\rangle,|01\rangle,|10\rangle,|02\rangle,|11\rangle,|20\rangle,|12\rangle,|21\rangle,|22\rangle\}$.

\begin{definition} [Haar-random local quantum circuit with charge conservation] \label{def:Hr}
Recall that each local unitary $U_{i,i+1}^{(t)}$ in Eq. (\ref{lu}) commutes with $S_i^z+S_{i+1}^z$ and is therefore block diagonal in the computational basis. The ensemble of Haar-random local quantum circuits with charge conservation is defined by letting each block of each $U_{i,i+1}^{(t)}$ be an independent Haar-random unitary.
\end{definition}

\section{Results and proofs} \label{s:qudit}

Recall that in our notation, the eigenstates of $S_i^z$ are $\{|0\rangle_i,|1\rangle_i,\ldots,|d-1\rangle_i\}$ with $S_i^z|k\rangle_i=((d-1)/2-k)|k\rangle_i$. Let $Q_i:=(d-1)/2-S_i^z$ be the charge operator. Since $Q_i|k\rangle_i=k|k\rangle_i$, we interpret $|k\rangle_i$ as there being $k$ charges on site $i$. Let $X_i$ be the generalized Pauli $X$ operator at site $i$ defined by
\begin{equation}
    X_i|k\rangle_i=|(k+1)\bmod d\rangle_i.
\end{equation}
Let $\{|0)_i,|1)_i,
\ldots,|d-1)_i\}$ be eigenstates of $X_i$. It is not difficult to see that $|\langle k|k')|=1/\sqrt d$ for any $k,k'=0,1,\ldots,d-1$, where the subscript $i$ has been dropped for notational simplicity.

Diffusive transport means that the transport of conserved quantities satisfies the diffusion equation at large distance and time scales. It can be considered, e.g., in the linear response regime and in quantum quench, where the system is infinitely close to and far from equilibrium, respectively. However, it is not clear whether diffusive transport in one setting is equivalent to or implies that in another (it might be possible that transport is diffusive in one setting but not in another). Our results rely on the following necessary condition for no-faster-than-diffusive transport.

\begin{condition} \label{def:diff}
Consider a chain of $N$ qudits divided into two subsystems $C\otimes D$. Subsystem $C$ is a contiguous region of $m$ qudits, and subsystem $D$ is the rest of the system. We initialize $C$ in the state $|0\rangle^{\otimes m}$ and $D$ in an arbitrary product state, i.e., each qudit in $D$ is disentangled from all other qudits. Let $i$ be the position of a qudit in the bulk of $C$ such that the distances from site $i$ to the two endpoints of $C$ are both $\Theta(m)$. Then,
\begin{equation} \label{eq:diff}
\langle\psi(t)|Q_i|\psi(t)\rangle=e^{-\Omega(m^2/t)},
\end{equation}
where $\psi(t)$ is the state (wave function) at time $t$.
\end{condition}

Equation (\ref{eq:diff}) implies that
\begin{equation} \label{eq:diffc}
\|(1-|0\rangle_i\langle0|_i)|\psi(t)\rangle\|^2=\langle\psi(t)|(1-|0\rangle_i\langle0|_i)|\psi(t)\rangle\le\langle\psi(t)|Q_i|\psi(t)\rangle=e^{-\Omega(m^2/t)}.
\end{equation}

Equation (\ref{eq:diff}) can be intuitively understood as follows. At initialization $t=0$, there is no charge in $C$, i.e., $C$ is in the all-zero state. Any charge observed on site $i$ at a later time $t>0$ must be transported from $D$ all the way to the bulk of $C$. The distance is $\Theta(m)$. The left-hand side of Eq. (\ref{eq:diff}) is the amount of charge on site $i$ at time $t$, and the right-hand side follows from the diffusion equation. In particular, a nonvanishing amount of charge requires that $t=\Omega(m^2)$. It should be clear that violating Eq. (\ref{eq:diff}) unambiguously implies that charge transport is faster than diffusive.

As an instructive example, we show that

\begin{lemma} \label{l:r}
For any initial state $|\psi(0)\rangle$ with no charge in $C$,
\begin{equation} \label{eq:rand}
    \Pr_{U(t,0)\in\mathcal R}\left(\langle\psi(0)|U^\dag(t,0)Q_iU(t,0)|\psi(0)\rangle= e^{-\Omega(m^2/t)}\right)=1-e^{-\Omega(m^2/t)},
\end{equation}
where $\mathcal R$ is the ensemble of Haar-random local quantum circuits with charge conservation (Definition \ref{def:Hr}).
\end{lemma}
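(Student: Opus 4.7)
The plan is to first bound the ensemble-averaged quantity $\E_U[\langle\psi(0)|U^\dagger(t,0)Q_iU(t,0)|\psi(0)\rangle]$ by $e^{-\Omega(m^2/t)}$, and then convert this into the high-probability statement via Markov's inequality.

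For the first step, the key observation is that the block-wise Haar average of each local gate yields an exact operator identity in the Heisenberg picture. Write $V$ for a single gate $U_{j,j+1}^{(t)}$. Because $V$ is block-diagonal in the total-charge-on-$(j,j+1)$ basis and each block is Haar-random, applying the identity $\int dU\,U^\dagger A U=(\tr A/d)I$ block-by-block gives
\begin{equation}
\E\bigl[V^\dagger Q_j V\bigr]=\frac{Q_j+Q_{j+1}}{2},
\end{equation}
since within each two-site total-charge-$Q$ sector the trace of $Q_j$ equals $d_Q\,Q/2$ by the exchange symmetry between the two sites. Being an operator identity, this is independent of the state on which $V$ acts and therefore iterates cleanly across all gates and both sublayers. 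Writing $q_k(t):=\E[\langle\psi(t)|Q_k|\psi(t)\rangle]$, each averaged gate on $(j,j+1)$ implements the linear update $q_j,q_{j+1}\mapsto(q_j+q_{j+1})/2$, so the profile $(q_k(t))_k$ evolves under the symmetric nearest-neighbor averaging dynamics of a random walk on the chain.

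Since $q_k(0)=0$ for every $k\in C$ and $q_k(0)\le d-1$ for $k\in D$, decomposing this walk by the contribution of each starting site gives $q_i(t)\le(d-1)\sum_{k\in D}p_{k\to i}(t)$, where $p_{k\to i}(t)$ is the probability that the walk starting at $k$ sits at site $i$ after $t$ steps. Because $i$ is at distance $\Theta(m)$ from every site of $D$, a Chernoff/Azuma bound for a bounded-step random walk yields $\sum_{k\in D}p_{k\to i}(t)=e^{-\Omega(m^2/t)}$, and hence $\E_U[\langle\psi(0)|U^\dagger(t,0)Q_iU(t,0)|\psi(0)\rangle]=q_i(t)=e^{-\Omega(m^2/t)}$. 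Since $Q_i\ge 0$, the random variable $X(U):=\langle\psi(0)|U^\dagger(t,0)Q_iU(t,0)|\psi(0)\rangle$ is non-negative, so Markov's inequality gives $\Pr(X\ge a\,\E X)\le 1/a$. Choosing $a=e^{cm^2/(2t)}$ with $c$ the implicit constant in $\E X\le e^{-cm^2/t}$ makes both the threshold $a\,\E X$ and the failure probability $1/a$ equal to $e^{-cm^2/(2t)}=e^{-\Omega(m^2/t)}$, which is exactly the claim.

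The step I expect to be most delicate is the operator identity $\E[V^\dagger Q_j V]=(Q_j+Q_{j+1})/2$ and its clean iteration across layers: one must verify carefully that the block-diagonal structure really yields this symmetric redistribution for \emph{every} local dimension $d\ge 2$, and not only for $d=2$. The operator-identity argument above is what makes the step dimension-independent: the $d$-dependence sits only in the dimensions $d_Q$ of the two-site total-charge sectors, and these dimensions cancel in the averaging identity by the same exchange-symmetry reasoning that underlies the qubit proof of \cite{Hua19}. A secondary subtlety is justifying the Chernoff/Azuma bound for the averaging dynamics (rather than a clean random walk on $\mathbb Z$), but since each layer displaces the charge profile by at most one site in each direction, standard martingale concentration applies directly.
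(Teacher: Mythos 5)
Your proposal is correct and follows essentially the same route as the paper: show that the ensemble-averaged charge profile evolves as an unbiased (symmetric nearest-neighbor averaging) random walk, deduce $\E_U[\langle Q_i\rangle]=e^{-\Omega(m^2/t)}$ for $i$ in the bulk of $C$, and finish with Markov's inequality. The only difference is one of detail: the paper asserts the random-walk behavior by citing \cite{KVH18, RPv18}, whereas you derive it explicitly via the block-Haar identity $\E[V^\dagger Q_jV]=(Q_j+Q_{j+1})/2$, whose exchange-symmetry justification is indeed valid for every local dimension $d\ge2$.
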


\begin{proof}
It is not difficult to prove that the distribution of charge
\begin{equation}
    \left\{\E_{U(t,0)\in\mathcal R}\langle\psi(0)|U^\dag(t,0)Q_iU(t,0)|\psi(0)\rangle\right\}_{i=1}^{N}
\end{equation}
after averaging over the ensemble $\mathcal R$ evolves as an unbiased discrete random walk \cite{KVH18, RPv18}. Hence,
\begin{equation}
    \E_{U(t,0)\in\mathcal R}\langle\psi(0)|U^\dag(t,0)Q_iU(t,0)|\psi(0)\rangle=e^{-\Omega(m^2/t)}
\end{equation}
if site $i$ is in the bulk of $C$. Then, Eq. (\ref{eq:rand}) follows from Markov's inequality.
\end{proof}

We are ready to state and prove the main result.

\begin{theorem} \label{thm}
Consider a chain of $N$ qudits as a bipartite quantum system $A\otimes B$. Assume without loss of generality that $N$ is even. Subsystem $A$ consists of qudits at sites $1,2,\ldots,N/2$, and subsystem $B$ is the rest of the system (we study the entanglement across the middle cut). Initialize the system in a random product state $|\psi_{\rm ini}\rangle$ in the generalized Pauli $X$ basis, i.e., each spin is independently in $\{|0),|1),\ldots,|d-1)\}$ with equal probability. Let $\alpha>1$ and $\rho_A(t):=\tr_B(U(t,0)|\psi_{\rm ini}\rangle\langle\psi_{\rm ini}|U^\dag(t,0))$ be the reduced density matrix of subsystem $A$ at time $t$. If charge transport under the dynamics $U(t,0)$ is not faster than diffusive in the sense of Condition \ref{def:diff}, then
\begin{equation} \label{maineq}
R_\alpha(\rho_A)=\frac{\alpha}{\alpha-1}O\left(\sqrt{t\log t}\right)
\end{equation}
holds with probability $\ge1-1/p(t)$, where $p$ is a polynomial of arbitrarily high degree.
\end{theorem}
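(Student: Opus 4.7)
The plan. By Lemma~\ref{l:1}, $R_\alpha \le \frac{\alpha}{\alpha-1} R_\infty$, so it suffices to show $R_\infty(\rho_A) = O(\sqrt{t\log t})$ with probability $\ge 1-1/p(t)$, equivalently $\Lambda_1(\rho_A) \ge e^{-O(\sqrt{t\log t})}$. Because $\Lambda_1(\rho_A) \ge |\langle \Phi_A\Phi_B|\psi(t)\rangle|^2$ for any unit product vector $|\Phi_A\rangle_A \otimes |\Phi_B\rangle_B$, the task is to exhibit such a product state with non-trivial overlap with $|\psi(t)\rangle$.

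The construction uses a buffer region $\bar C$ of $m = \Theta(\sqrt{t\log t})$ contiguous sites straddling the $A|B$ cut. Define the modified initial state $|\tilde\psi_{\rm ini}\rangle$ obtained from $|\psi_{\rm ini}\rangle$ by replacing its factors on $\bar C$ with $|0\rangle^{\otimes m}$; the single-site overlaps $|\langle 0|k)_i|=1/\sqrt d$ give $|\langle \tilde\psi_{\rm ini}|\psi_{\rm ini}\rangle|=d^{-m/2}$ exactly, and $|\tilde\psi_{\rm ini}\rangle$ is a product across the cut. Let $V(t,0) = V_A(t,0)\otimes V_B(t,0)$ be the factored circuit obtained from $U(t,0)$ by deleting the cross-cut two-qudit gate $U^{(s)}_{N/2,N/2+1}$ from each layer (well-defined since the gates within any sublayer commute), so that $V(t,0)|\tilde\psi_{\rm ini}\rangle$ is a product state across the cut and is the candidate $|\Phi_A\Phi_B\rangle$. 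Now apply Condition~\ref{def:diff} to $|\tilde\psi_{\rm ini}\rangle$ with $\bar C$ in the role of $C$: at every intermediate time $s\le t$, the expected charge at each bulk site of $\bar C$ (in particular the cut sites $N/2$ and $N/2+1$) is bounded by $e^{-\Omega(m^2/t)}$. Combined with the operator inequality $I - |00\rangle\langle 00|_{\rm cut} \le Q_{N/2} + Q_{N/2+1}$ and the block-diagonal structure of the cross-cut gate (whose action on $|00\rangle$ is a pure phase absorbable into $V$), a telescoping estimate across the $t$ layers bounds the discrepancy $\|U(t,0)|\tilde\psi_{\rm ini}\rangle - V(t,0)|\tilde\psi_{\rm ini}\rangle\|$. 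Decomposing $|\psi_{\rm ini}\rangle = c_0|\tilde\psi_{\rm ini}\rangle + c_1|\tilde\psi_{\rm ini}^\perp\rangle$ with $|c_0|=d^{-m/2}$ and using the unitary orthogonality $U(t,0)|\tilde\psi_{\rm ini}\rangle \perp U(t,0)|\tilde\psi_{\rm ini}^\perp\rangle$ converts this into a lower bound on $|\langle V(t,0)\tilde\psi_{\rm ini}|\psi(t)\rangle|^2$, which, after a Markov-type argument over the randomness of $|\psi_{\rm ini}\rangle$, promotes to the claimed high-probability statement $R_\infty = O(m) = O(\sqrt{t\log t})$.

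The main obstacle is the reconciliation of two naturally conflicting scales in the final overlap estimate: the target $|c_0|^2 = d^{-m}$ is subpolynomially small in $t$, whereas the naive layer-by-layer telescoping produces an error that is only polynomially small. A direct pointwise triangle inequality is therefore too crude, and the argument must exploit either an expectation/Markov cancellation over the random $|\psi_{\rm ini}\rangle$ (the source of the $1-1/p(t)$ factor in the theorem) or additional phase cancellations from the random $X$-basis initialization on $\bar C$. This probabilistic step, rather than the deterministic Condition~\ref{def:diff} alone, is what carries the bulk of the technical weight.
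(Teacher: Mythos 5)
Your plan is the paper's proof almost step for step: your buffer region $\bar C$ of $m=\Theta(\sqrt{t\log t})$ sites is the paper's $C$, your $|\tilde\psi_{\rm ini}\rangle$ is its $|\psi_0\rangle$, the factored circuit $V$ (with the cross-cut gate replaced by the phase $\langle00|U^{(s)}_{N/2,N/2+1}|00\rangle$ rather than deleted outright, as you correctly note is necessary) is the same, and the telescoping across layers using $1-|00\rangle\langle00|\le Q_{N/2}+Q_{N/2+1}$ gives exactly the paper's bound $\||\Delta_t\rangle\|\le 4te^{-\Omega(m^2/t)}$ for $|\Delta_t\rangle:=U(t,0)|\tilde\psi_{\rm ini}\rangle-V(t,0)|\tilde\psi_{\rm ini}\rangle$. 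The obstacle you flag at the end is real, and of your two candidate resolutions the paper uses the first (a Markov argument over the random initialization); since you leave it as an ``either/or,'' here is the mechanism you are missing. Fix the $D$-part of the initial state and let $S$ be the set of $d^m$ product states obtained by ranging the $\bar C$-factors over the generalized Pauli $X$ basis; these states are pairwise orthogonal, so $U(t,0)$ maps $S$ to an orthonormal family, and Bessel's inequality gives $\sum_{\psi\in S}|\langle\Delta_t|U(t,0)|\psi\rangle|^2\le\||\Delta_t\rangle\|^2$. Crucially, $|\Delta_t\rangle$ depends only on the fixed $D$-part and not on which element of $S$ is drawn, so the average of $|\langle\Delta_t|U(t,0)|\psi\rangle|^2$ over $S$ is at most $d^{-m}\||\Delta_t\rangle\|^2$ --- this orthogonality is the source of the factor $d^{-m}$ that reconciles your two conflicting scales. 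Markov's inequality then yields $|\langle\Delta_t|U(t,0)|\psi_{\rm ini}\rangle|\le d^{-m/2}\||\Delta_t\rangle\|\sqrt{p(t)}$ with probability $\ge1-1/p(t)$, whence $|\langle V(t,0)\tilde\psi_{\rm ini},U(t,0)\psi_{\rm ini}\rangle|\ge d^{-m/2}\bigl(1-4te^{-\Omega(m^2/t)}\sqrt{p(t)}\bigr)$; choosing the constant in $m=O(\sqrt{t\log t})$ large enough makes the parenthesized factor $\Omega(1)$, so $\Lambda_1=\Omega(d^{-m})$ and $R_\infty=O(m)$ as desired. In short: same route as the paper and a correct diagnosis of where the weight lies, but the decisive step --- orthogonality of $S$, Bessel, then Markov --- must actually be carried out for the argument to close; the ``phase cancellation'' alternative you mention is not what is used.
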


\begin{proof}
We divide the system into two subsystems $C\otimes D$. Subsystem $C$ consists of $m$ qudits at sites $N/2-m/2+1,N/2-m/2+2,\ldots,N/2+m/2$ near the cut, where $m$ is an even positive integer to be determined later. Subsystem $D$ is the rest of the system. The initial state can be expressed as
\begin{equation}
|\psi_{\rm ini}\rangle=|\psi_{\rm ini}\rangle_C\otimes|\psi_{\rm ini}\rangle_D,
\end{equation}
where $|\psi_{\rm ini}\rangle_C$ and $|\psi_{\rm ini}\rangle_D$ are random product states in subsystems $C$ and $D$, respectively. Define
\begin{equation}
|\psi_0\rangle=|0\rangle^{\otimes m}_C\otimes|\psi_{\rm ini}\rangle_D
\end{equation}
so that $|\langle\psi_0|\psi_{\rm ini}\rangle|=d^{-m/2}$. Since $U(t,0)$ is unitary,
\begin{equation}
|\langle U(t,0)\psi_0,U(t,0)\psi_{\rm ini}\rangle|=d^{-m/2}.
\end{equation}
The left-hand side of this equation is the absolute value of the inner product of $U(t,0)|\psi_0\rangle$ and $U(t,0)|\psi_{\rm ini}\rangle$. Occasionally we do not use the standard Dirac notation because it is cumbersome. Let $P:=|0\rangle_{N/2}\langle0|_{N/2}\otimes|0\rangle_{N/2+1}\langle0|_{N/2+1}$. Using (\ref{eq:diffc}) twice,
\begin{multline} \label{err}
\|(1-P)U(t,0)|\psi_0\rangle\|\le\|(1-|0\rangle_{N/2}\langle0|_{N/2})U(t,0)|\psi_0\rangle\|\\
+\||0\rangle_{N/2}\langle0|_{N/2}(1-|0\rangle_{N/2+1}\langle0|_{N/2+1})U(t,0)|\psi_0\rangle\|\le2e^{-\Omega(m^2/t)}.
\end{multline}

Assume without loss of generality that $N/2$ is odd. In $U(t,t-1)$, the only local unitary acting on both subsystems $A$ and $B$ is in the second product on the right-hand of Eq. (\ref{lu}). Define a modified local quantum circuit
\begin{align}
&V(t,0)=V(t,t-1)V(t-1,t-2)\cdots V(1,0),\\
&V(t,t-1)=\prod_{i=1}^{N/2-1} U_{2i,2i+1}^{(t)}\prod_{i=1}^{(N/2-1)/2}U_{2i-1,2i}^{(t)}u_{N/2,N/2+1}^{(t)}\prod_{i=(N/2+3)/2}^{N/2}U_{2i-1,2i}^{(t)},
\end{align}
where $u_{N/2,N/2+1}^{(t)}:=\langle00|U_{N/2,N/2+1}^{(t)}|00\rangle$ is a phase factor (complex number). It is easy to see that
\begin{equation}
U(t,t-1)P=V(t,t-1)P.
\end{equation}
Hence,
\begin{align}
U(t,0)|\psi_0\rangle&=U(t,t-1)U(t-1,0)|\psi_0\rangle\approx U(t,t-1)PU(t-1,0)|\psi_0\rangle\nonumber\\
&=V(t,t-1)PU(t-1,0)|\psi_0\rangle\approx V(t,t-1)U(t-1,0)|\psi_0\rangle,
\end{align}
where the error of each approximation step is upper bounded by (\ref{err}). Iterating this process,
\begin{equation}
\||\Delta_t\rangle\|=4te^{-\Omega(m^2/t)},\quad|\Delta_t\rangle:=U(t,0)|\psi_0\rangle-V(t,0)|\psi_0\rangle.
\end{equation}

Recall that both $|\psi_{\rm ini}\rangle_C$ and $|\psi_{\rm ini}\rangle_D$ are random product states in the generalized Pauli $X$ basis. We now fix the latter but not the former. Then, $|\psi_0\rangle$ and hence $|\Delta_t\rangle$ are fixed but $|\psi_{\rm ini}\rangle$ is not. Let
\begin{equation}
S=\{|0),|1),\ldots,|d)\}^{\otimes m}_C\otimes|\psi_{\rm ini}\rangle_D
\end{equation}
with $|S|=d^m$ be the set of all possible initial states consistent with $|\psi_{\rm ini}\rangle_D$. Since the states in $S$ are pairwise orthogonal,
\begin{equation}
\frac{1}{|S|}\sum_{|\psi_{\rm ini}\rangle\in S}|\langle\Delta_t|U(t,0)|\psi_{\rm ini}\rangle|^2\le d^{-m}\||\Delta_t\rangle\|^2.
\end{equation}
Define a subset of $S$ as
\begin{equation}
S':=\left\{|\psi_{\rm ini}\rangle\in S:|\langle\Delta_t|U(t,0)|\psi_{\rm ini}\rangle|\le d^{-m/2}\||\Delta_t\rangle\|\sqrt{p(t)}\right\}.
\end{equation}
Markov's inequality implies that
\begin{equation}
|S'|/|S|\ge1-1/p(t).
\end{equation}
It suffices to prove Eq. (\ref{maineq}) for all (initial) states in $S'$. To this end, we use

\begin{lemma} [Eckart-Young theorem \cite{EY36}] \label{EY}
Let
\begin{equation}
|\psi\rangle=\sum_{i\ge1}\lambda_i|a_i\rangle_A\otimes|b_i\rangle_B
\end{equation}
be the Schmidt decomposition of the state $|\psi\rangle$, where $\lambda_1\ge\lambda_2\ge\cdots>0$ with $\sum_{i\ge1}\lambda_i^2=1$ are the Schmidt coefficients in descending order. Then,
\begin{equation}
|\langle\phi|\psi\rangle|\le|\langle\psi'|\psi\rangle|=\sqrt{\sum_{i=1}^D\lambda_i^2}
\end{equation}
for any normalized state $|\phi\rangle$ of Schmidt rank $D$, where
\begin{equation}
|\psi'\rangle:=\frac{1}{\sqrt{\sum_{i=1}^D\lambda_i^2}}\sum_{i=1}^D\lambda_i|a_i\rangle_A\otimes|b_i\rangle_B.
\end{equation}
\end{lemma}

For any state $|\psi_{\rm ini}\rangle\in S'$,
\begin{align} \label{overlap}
&|\langle V(t,0)\psi_0,U(t,0)\psi_{\rm ini}\rangle|=|\langle U(t,0)\psi_0,U(t,0)\psi_{\rm ini}\rangle-\langle \Delta_t|U(t,0)|\psi_{\rm ini}\rangle|\nonumber\\
&\ge d^{-m/2}-|\langle\Delta_t|U(t,0)|\psi_{\rm ini}\rangle|\ge d^{-m/2}\left(1-\||\Delta_t\rangle\|\sqrt{p(t)}\right)\nonumber\\
&=d^{-m/2}\left(1-4te^{-\Omega(m^2/t)}\sqrt{p(t)}\right).
\end{align}
Let $\lambda_1$ be the largest Schmidt coefficient of $U(t,0)|\psi_{\rm ini}\rangle$, and $\Lambda_1=\lambda_1^2$ be the largest eigenvalue of the reduced density matrix $\rho_A(t)=\tr_B(U(t,0)|\psi_{\rm ini}\rangle\langle\psi_{\rm ini}|U^\dag(t,0))$. Since none of the local unitaries in $V(t,t-1)$ act on both subsystems $A$ and $B$, $V(t,t-1)$ and hence $V(t,0)$ do not generate entanglement so that $V(t,0)|\psi_0\rangle$ is a product state between $A$ and $B$, i.e., a state of Schmidt rank $1$. Combining this observation with (\ref{overlap}) and Lemma \ref{EY},
\begin{equation} \label{eq:f}
\lambda_1\ge d^{-m/2}\left(1-4te^{-\Omega(m^2/t)}\sqrt{p(t)}\right).
\end{equation}
Lemma \ref{l:1} implies that
\begin{equation}
R_\alpha(\rho_A)\le\frac{\alpha}{\alpha-1}R_\infty(\rho_A)=-\frac{\alpha}{\alpha-1}\ln\Lambda_1=-\frac{2\alpha}{\alpha-1}\ln\lambda_1.
\end{equation}
We complete the proof by letting $m=O(\sqrt{t\log t})$ with a sufficiently large coefficient hidden in the Big-O notation such that the factor in parentheses on the right-hand side of (\ref{eq:f}) is lower bounded by a positive constant.
\end{proof}

Combined with Lemma \ref{l:r}, the conclusion of Theorem \ref{thm} applies in particular to Haar-random local quantum circuits with charge conservation.

As shown in Ref. \cite{Hua19}, it is straightforward to extend Theorem \ref{thm} to local quantum circuits where charge transport is sub- or super-diffusive. It is also straightforward to extend Theorem \ref{thm} to two and higher spatial dimensions.

\section*{Acknowledgments}

I would like to thank Aram W. Harrow for his suggestions which helped to improve the presentation and Marko \v Znidari\v c for pointing out a minor error (which is more than a typo) in a draft of this note. This work was supported by NSF grant PHY-1818914 and a Samsung Advanced Institute of Technology Global Research Partnership.

\bibliographystyle{abbrv}
\bibliography{main}

\end{document}